\newtheorem{theorem}{Theorem}[section]
\newtheorem{proposition}[theorem]{Proposition}
\newtheorem{definition}[theorem]{Definition}
\newcommand{\tr}{{\rm Tr\hskip -0.2em}~}
\newcommand{\nv}{\underline{\hbox{$0$}}}
\DeclareMathOperator{\frechetdiff}{\mathit d}
\newcommand{\fd}[1]{\hskip -0.2em\frechetdiff\hskip -0.32em{#1}}
\begin{document}

\title{Characterisation of matrix entropies}
\author{Frank Hansen and Zhihua Zhang}
\date{February  10, 2014\\{\small Major revision March 15, 2015}}

\maketitle

\begin{abstract}
The notion of matrix entropy was introduced by Tropp and Chen with the aim of measuring the fluctuations of random matrices.  It is a certain entropy functional constructed from a representing function with prescribed properties, and Tropp and Chen gave some examples. We give several abstract characterisations of matrix entropies together with a sufficient condition in terms of the second derivative of their representing function.\end{abstract}

\section{Introduction and main result}

The search for concentration inequalities has been a flourishing field in probability theory during the past
thirty years \cite{kn:Boucheron:2005}. Among various inequalities the matrix concentration inequality class has
applications in many fields such as random graph theory, compressed sensing et cetera \cite{kn:tropp:2014}. Recently Chen and Tropp developed a matrix extension of the entropy method and used it to search for matrix concentration inequalities.
They studied in particular the matrix entropy inequalities associated with the standard entropy function
$t \mapsto t\log t $ and the power functions $t \mapsto t^p $ for $p \in [1,2]. $ 

Let throughout this paper $ H_n $ denote a Hilbert space of finite dimension $ n, $ and let $ \mathcal H_n $ be the Hilbert space of bounded linear operators on $ H_n $  equipped with the inner product given by the trace.

Tropp and Chen \cite{kn:tropp:2014} essentially\footnote{We allow the representing function $ \varphi $ to be defined only in the positive half-line since continuity in zero is automatic for all known examples.} gave the following definition:

\begin{definition}
Let for each natural number $n$ the class $\Phi_{n}$ consist of the functions $\varphi\colon (0,\infty) \to \mathbf{R}$
that are either affine or satisfy the following three conditions.

\begin{enumerate}[(i)]

\item $\varphi$ is convex.

\item $\varphi$ is twice continuously differentiable.

\item Let $ f=\varphi' $ be the derivative of $ \varphi. $ The Fréchet differential $ \fd{f}(x) $ 
of the matrix function $ x\to f(x) $ is an invertible linear operator on the Hilbert space $ \mathcal H_n $ and the map $ x\mapsto \fd{f}(x)^{-1}$ is  concave. 

\end{enumerate}
\end{definition}

Notice that $ f $ in the above definition is an increasing function.
Condition $ (ii) $ may be omitted as it follows from the other conditions. The class of (representing functions for) matrix entropies  $\Phi_\infty $ is defined as the intersection
\[
\Phi_\infty =\bigcap _{n=1}^\infty \Phi_{n}.
\]
It follows from Theorem~\ref{Theorem: convexity of Q-form}
 that each set $ \Phi_n $ is a convex cone. This is not obvious \cite{kn:latala:2000, kn:Boucheron:2005} even for $ n=1. $  The authors  \cite{kn:Boucheron:2005} showed that 
a twice differentiable strictly convex function $\varphi$ defined in the positive half-line is in $\Phi_{1}$ if the
induced $\varphi$-entropy
\[
H_{\varphi}(Z)=\mathbb{E}[\varphi(Z)]-\varphi(\mathbb{E}[Z])
\]
is convex on the set $\mathbb{L}_{\infty}^{+}(\Omega, \mathcal{A}, \mathbb{P})$ of bounded and 
non-negative random variables $Z$, where $\mathbb{E}[Z]$ denotes the expectation of $Z$.

 More generally,
Tropp and Chen  \cite{kn:tropp:2014} introduced to each $\varphi \in \Phi_\infty$ the following matrix $\varphi$-entropy functional
\[
H_{\varphi}(Z)=\mathbb{E}[\tr\varphi(Z)]-\tr\varphi(\mathbb{E}[Z]),
\]
where now $Z$ is a positive semi-definite random matrix. The authors established subadditivity of $ H_{\varphi} $ and derived matrix extensions of the bounded difference inequality and the moment inequality by choosing suitable representing functions in $ \Phi_\infty\,. $ For the difference inequality they used the function $t \to t\log t, $ and for the moment inequality the functions $ t \to t^p, $ where $ p=q/(q-1) $ for integers $ q=2,3,\dots. $

By applying and extending the techniques in \cite{kn:hansen:2014} we are able to reformulate the defining properties of a matrix entropy in a more transparent way giving rise to several abstract characterisations as given below.

\begin{theorem}\label{characterisation of matrix entropies}
Let $ \varphi\colon (0,\infty)\to \mathbf{R} $ be a twice continuously differentiable convex function, and let $ f=\varphi' $ denote the derivative of $ \varphi. $ The following conditions are equivalent.

\begin{enumerate}[(i)]

\item $ \varphi $ is the representing function of a matrix entropy.

\item The map $ (x,h)\mapsto \tr h^*\fd{f}(x)h $ is, for each natural number $ n, $ convex in pairs of operators in $ B(H_n), $ where $ x $ is positive definite.  

\item The function of two variables
\[
(x,y)\mapsto \tr (y-x)(f(y)-f(x))
\]
is convex in positive definite operators on an arbitrary finite dimensional Hilbert space.

\item The function of two variables
\[
g(t,s)=\frac{s-t}{f(s)-f(t)}\qquad t,s>0
\]
is operator concave.

\end{enumerate}

\end{theorem}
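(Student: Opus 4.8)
The plan is to run three bridges: (i)$\Leftrightarrow$(ii) via the convexity theorem for the $Q$-form (Theorem~\ref{Theorem: convexity of Q-form}), (ii)$\Leftrightarrow$(iii) via one integral representation together with one limiting relation, and (i)$\Leftrightarrow$(iv) via a direct-sum amplification. First one disposes of the degenerate case: if $\varphi$ is affine then $f$ is constant, $\fd{f}(x)=0$, conditions (i)--(iii) hold trivially, and (iv) is to be read on the set where $g$ is defined. So assume henceforth that $\varphi$ is not affine; then $f$ is strictly increasing, and by the Daleckii--Krein formula $\fd{f}(x)$ is, in the eigenbasis of the positive definite operator $x$, Schur multiplication by the matrix of divided differences $f^{[1]}(\lambda_i,\lambda_j)=(f(\lambda_i)-f(\lambda_j))/(\lambda_i-\lambda_j)>0$. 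In particular $\fd{f}(x)$ is a positive invertible operator on $\mathcal H_n$, self-adjoint for the trace inner product, which preserves self-adjointness and commutes with the adjoint operation. For (i)$\Leftrightarrow$(ii) I would then simply feed the family $x\mapsto\fd{f}(x)$ of positive invertible operators into Theorem~\ref{Theorem: convexity of Q-form}: condition (i) says $x\mapsto\fd{f}(x)^{-1}$ is concave for every $n$, which the theorem identifies with joint convexity of $(x,h)\mapsto\tr h^{*}\fd{f}(x)h$; this is also what shows $\Phi_{n}$ is a convex cone, since $\fd{f}$ depends linearly on $f$.

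For (ii)$\Leftrightarrow$(iii) I would first observe that the form $(x,h)\mapsto\tr h^{*}\fd{f}(x)h$ splits, writing $h=h_{1}+ih_{2}$ with $h_{1},h_{2}$ self-adjoint, as $\tr h_{1}\fd{f}(x)h_{1}+\tr h_{2}\fd{f}(x)h_{2}$ — the cross terms cancel because $\fd{f}(x)$ is self-adjoint and preserves self-adjointness — so (ii) is equivalent to its restriction to self-adjoint $h$. For positive definite $x,y$ the chain rule gives the integral representation
\[
\tr(y-x)\bigl(f(y)-f(x)\bigr)=\int_{0}^{1}\tr(y-x)\,\fd{f}\bigl((1-t)x+ty\bigr)(y-x)\,dt ,
\]
whose integrand is the self-adjoint $Q$-form precomposed with the linear map $(x,y)\mapsto\bigl((1-t)x+ty,\;y-x\bigr)$ into the domain of (ii); hence convexity of the $Q$-form forces convexity of the left-hand side, i.e.\ (ii)$\Rightarrow$(iii). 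Conversely, $\tr h\fd{f}(x)h=\lim_{s\to 0}s^{-2}\,\tr(sh)\bigl(f(x+sh)-f(x)\bigr)$ exhibits the self-adjoint $Q$-form as a pointwise limit of the functions $(x,h)\mapsto s^{-2}\tr\bigl((x+sh)-x\bigr)\bigl(f(x+sh)-f(x)\bigr)$, each a positive rescaling of (iii) composed with a linear map, hence each convex; so (iii)$\Rightarrow$(ii).

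Finally, for (i)$\Leftrightarrow$(iv) I would write $L_{x},R_{y}$ for left and right multiplication, note that $\fd{f}(x)=f^{[1]}(L_{x},R_{x})$ and that operator concavity of $g$ unwinds to concavity of $(x,y)\mapsto g(L_{x},R_{y})=f^{[1]}(L_{x},R_{y})^{-1}$ on positive definite pairs; then (iv)$\Rightarrow$(i) is the diagonal restriction $x=y$. For (i)$\Rightarrow$(iv) put $z=x\oplus y$ on $H_{n}\oplus H_{n}$; being block diagonal, its divided-difference Schur multiplier $\fd{f}(z)$ and hence $\fd{f}(z)^{-1}$ respect the $2\times 2$ block decomposition of $B(H_{n}\oplus H_{n})$, acting on the off-diagonal corner $B(H_{n})$ exactly as $g(L_{x},R_{y})$; since $z\mapsto\fd{f}(z)^{-1}$ is concave by (i) in dimension $2n$, $z=x\oplus y$ is linear in $(x,y)$, and compression to a corner ($T\mapsto V^{*}TV$ for the isometric embedding $V$ of $B(H_{n})$ onto that corner) is a positive map preserving operator concavity, we conclude that $(x,y)\mapsto g(L_{x},R_{y})$ is operator concave. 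The main obstacle I anticipate is the bookkeeping inside (ii)$\Leftrightarrow$(iii): justifying the interchange of the limit with convexity, keeping the positive-definiteness constraints $x+sh>0$ and $(1-t)x+ty>0$ under control on the relevant convex domains, and verifying that the $Q$-form produced meets the positivity, invertibility and regularity hypotheses of Theorem~\ref{Theorem: convexity of Q-form}; the amplification step for (iv) is conceptually clean but leans on the (standard but not formal) fact that operator concavity survives compression to a corner.
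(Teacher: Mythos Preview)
Your proposal is correct and follows the same three-bridge architecture as the paper: Theorem~\ref{Theorem: convexity of Q-form} for (i)$\Leftrightarrow$(ii), the integral representation plus the $s\to 0$ limit for (ii)$\Leftrightarrow$(iii) (the paper's Theorem~\ref{characterisation in terms of two variable trace function}), and the block-diagonal amplification $z=x\oplus y$ on $H_n\oplus H_n$ together with compression to the off-diagonal corner for (i)$\Leftrightarrow$(iv) (the paper's Theorem~\ref{characterisation in terms of operator concavity}). Your (ii)$\Rightarrow$(iii) is marginally more direct than the paper's, which first lifts to the super-operator map $(x,y)\mapsto L_y^{*}\fd{f}(x)L_y$ before integrating, and the paper supplies precisely the structure-theoretic step you flag at the end: an arbitrary pair of commuting finite-dimensional $*$-algebras reduces, via decomposition into factors in standard form, to the $(L_x,R_y)$ model.
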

Theorem~\ref{characterisation of matrix entropies} follows from Theorem~\ref{Theorem: convexity of Q-form}, Theorem~\ref{characterisation in terms of two variable trace function}, and Theorem~\ref{characterisation in terms of operator concavity}. 

\begin{theorem}\label{sufficient condition}

Let $ \varphi\colon (0,\infty)\to \mathbf{R} $ be a twice differentiable function, and let $ f=\varphi' $ denote the derivative of $ \varphi. $ 
If $ f' $ is positive, numerically decreasing and operator convex then:

\begin{enumerate}[(i)]

\item  $ \varphi $ is the representing function of a matrix entropy.

\item $ \varphi $ allows a continuous extension to the closed interval $ [0,\infty). $

\item\label{canonical representation of a matrix entropy} $ \varphi $ may be written in the canonical form
\[
\displaystyle \varphi(x)=a+bx+\frac{\beta}{2}x^2+\int_0^\infty (1+\lambda)\Bigl(1-x+(x+\lambda)\log\frac{x+\lambda}{1+\lambda}\Bigr)\,d\mu(\lambda),
\]
where $ a=\varphi(1)-\varphi'(1)+\beta/2 $ and $ b=\varphi'(1)-\beta, $ in terms of a uniquely defined bounded and positive measure $ \mu. $

\end{enumerate}
\end{theorem}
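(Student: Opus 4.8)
The plan is to reduce all three assertions to a single integral representation of the second derivative $ f'=\varphi''. $ The crux is the claim that, \emph{if $ f' $ is positive, numerically decreasing and operator convex on $ (0,\infty), $ then}
\[
f'(x)=\beta+\int_0^\infty\frac{1+\lambda}{x+\lambda}\,d\mu(\lambda),\qquad x>0,
\]
\emph{for a uniquely determined constant $ \beta\ge 0 $ and a uniquely determined bounded positive measure $ \mu. $} Once this is in hand, (iii) follows by integrating twice, (ii) is read off the resulting formula, and (i) follows from Theorem~\ref{characterisation of matrix entropies}.

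To prove the claim, observe first that an operator convex function on $ (0,\infty) $ is numerically convex, hence continuous, so ``$ \varphi $ twice differentiable'' already gives $ \varphi\in C^2. $ By the characterisation of operator convexity on the positive half-line from \cite{kn:hansen:2014}, operator convexity of $ f' $ is equivalent to operator monotonicity of the divided-difference function $ g(x)=(f'(x)-f'(1))/(x-1) $ (with $ g(1)=f''(1) $). Since $ f' $ is numerically decreasing we have $ g\le 0, $ so $ g $ is bounded above; Löwner's representation of a bounded operator monotone function on $ (0,\infty) $ gives $ g(x)=c-\int_0^\infty(x+\lambda)^{-1}\,d\nu(\lambda) $ for some $ c\in\mathbf{R} $ and a positive measure $ \nu. $ Substituting into $ f'(x)=f'(1)+(x-1)g(x) $ yields $ f'(x)=f'(1)+c(x-1)-\int_0^\infty\frac{x-1}{x+\lambda}\,d\nu(\lambda), $ and since $ f' $ is positive and numerically decreasing the limit $ \lim_{x\to\infty}f'(x) $ exists and is $ \ge 0; $ this forces $ c=0 $ and $ \nu $ to have finite total mass (otherwise the right-hand side is either unbounded or eventually negative). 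The identity $ (x-1)/(x+\lambda)=1-(1+\lambda)/(x+\lambda) $ now splits the integral, and with $ \beta:=f'(1)-\nu((0,\infty))=\lim_{x\to\infty}f'(x)\ge 0 $ and $ \mu:=\nu $ we obtain the asserted form, uniqueness being inherited from that of the Löwner representation. \emph{This reduction is the main obstacle:} operator convexity enters only here, and the delicate point is that the one-sided monotonicity together with positivity is exactly what rules out linear growth and an infinite representing measure.

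Granting the representation, (iii) is a calculation. Integrating $ f' $ twice and interchanging the $ \lambda $-integral with the two $ x $-integrations (justified by Tonelli's theorem, since on a compact subinterval of $ (0,\infty) $ the integrands are bounded in $ \lambda $ and $ \mu $ is finite), and normalising the constants of integration so that the $ \lambda $-integrand vanishes together with its first $ x $-derivative at $ x=1, $ one finds that this double antiderivative of $ (1+\lambda)/(x+\lambda) $ is exactly $ (1+\lambda)\bigl(1-x+(x+\lambda)\log\frac{x+\lambda}{1+\lambda}\bigr). $ The remaining polynomial $ \varphi(1)+\varphi'(1)(x-1)+\frac{\beta}{2}(x-1)^2 $ expands as $ a+bx+\frac{\beta}{2}x^2 $ with $ a=\varphi(1)-\varphi'(1)+\beta/2 $ and $ b=\varphi'(1)-\beta. $ For (ii) we argue from the same formula: on $ (0,1] $ the integrand $ (1+\lambda)\bigl(1-x+(x+\lambda)\log\frac{x+\lambda}{1+\lambda}\bigr) $ is non-negative (its $ x $-derivative $ (1+\lambda)\log\frac{x+\lambda}{1+\lambda} $ is $ \le 0 $ there and it vanishes at $ x=1 $) and is bounded above, uniformly in $ x\in(0,1] $ and $ \lambda>0, $ by $ (1+\lambda)\bigl(1+\lambda\log\frac{\lambda}{1+\lambda}\bigr), $ which is bounded in $ \lambda $ (it tends to $ 1 $ as $ \lambda\to 0 $ and to $ 1/2 $ as $ \lambda\to\infty $). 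As $ \mu $ is finite, dominated convergence gives a finite limit of $ \varphi(x) $ as $ x\to 0^+, $ and hence the continuous extension of $ \varphi $ to $ [0,\infty). $

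Finally, for (i) we verify condition~(iii) of Theorem~\ref{characterisation of matrix entropies}. Integrating the representation of $ f' $ and applying the functional calculus gives
\[
f(Y)-f(X)=\beta(Y-X)+\int_0^\infty(1+\lambda)\bigl(\log(Y+\lambda)-\log(X+\lambda)\bigr)\,d\mu(\lambda)
\]
for positive definite $ X,Y $ on any $ H_n, $ whence, multiplying by $ Y-X, $ taking the trace and interchanging trace and integral,
\[
\tr (Y-X)\bigl(f(Y)-f(X)\bigr)=\beta\,\tr (Y-X)^2+\int_0^\infty(1+\lambda)\,\tr \bigl((Y+\lambda)-(X+\lambda)\bigr)\bigl(\log(Y+\lambda)-\log(X+\lambda)\bigr)\,d\mu(\lambda).
\]
The term $ \beta\,\tr (Y-X)^2 $ is jointly convex in $ (X,Y) $ as a non-negative quadratic form, and each integrand equals, after the affine substitution $ (X,Y)\mapsto(X+\lambda,Y+\lambda), $ the symmetrised quantum relative entropy $ \tr U(\log U-\log V)+\tr V(\log V-\log U), $ which is jointly convex by Lindblad's joint convexity of relative entropy. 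An integral of jointly convex functions against a positive measure is jointly convex, so $ (X,Y)\mapsto\tr (Y-X)(f(Y)-f(X)) $ is jointly convex, and Theorem~\ref{characterisation of matrix entropies} shows that $ \varphi $ is the representing function of a matrix entropy.
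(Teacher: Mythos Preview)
Your argument is correct, and for parts~(ii) and~(iii) it runs parallel to the paper's proof: both integrate the representation $f'(x)=\beta+\int_0^\infty (1+\lambda)(x+\lambda)^{-1}\,d\mu(\lambda)$ twice with Fubini-type justifications, and both bound the kernel uniformly on $(0,1]\times[0,\infty)$ to obtain continuity at zero (the paper gets the bound $(1-x)^2$ via the mean value theorem, you get it by monotonicity in $x$; either works). The paper, however, does not re-derive this integral representation but quotes it as Proposition~\ref{integral formula for positive, decreasing operator convex functions}. Your derivation via Kraus' characterisation (operator convexity of $f'$ is equivalent to operator monotonicity of $g(x)=(f'(x)-f'(1))/(x-1)$) is correct in outline, but note that $g$ is only bounded \emph{above}, by $0$, not bounded; you need one more step --- $g$ increasing and $\le 0$ forces a finite limit at infinity and hence kills the linear term in the Pick representation --- before you can collapse it to the form $c-\int_0^\infty (x+\lambda)^{-1}\,d\nu(\lambda)$.

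The genuine difference is in~(i). The paper does \emph{not} use the integral representation of $f'$ here at all: it works directly with operator convexity and decrease of $f'$ via Hermite's identity
\[
\tr h^*\fd{f}(x)h=\int_0^1 \tr h^*\, f'\bigl(\lambda L_x+(1-\lambda)R_x\bigr)\,h\,d\lambda,
\]
and then invokes the result of \cite{kn:hansen:2006:3} that $(x,\xi)\mapsto (g(x)\xi\mid\xi)$ is jointly convex whenever $g$ is positive, decreasing and operator convex. This verifies condition~(ii) of Theorem~\ref{characterisation of matrix entropies} directly. Your route instead verifies condition~(iii) of that theorem: having already obtained the integral formula for $f$, you decompose $\tr(Y-X)(f(Y)-f(X))$ as $\beta\,\tr(Y-X)^2$ plus a $\mu$-integral of symmetrised Umegaki relative entropies $D(Y+\lambda\,\|\,X+\lambda)+D(X+\lambda\,\|\,Y+\lambda)$, each jointly convex by Lindblad's theorem. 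This is a valid and attractive alternative, trading the structural Fr\'echet-differential lemma for a reduction to a celebrated inequality; its cost is that your proof of~(i) now rests on the integral representation of $f'$, whereas in the paper~(i) is established independently of~(iii).
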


The uniquely defined measure $ \mu $ in the above theorem comes from the following well-known characterisation, see for example the analysis in \cite[Page 9-10]{kn:hansen:2006:1}.

\begin{proposition}\label{integral formula for positive, decreasing operator convex functions}
A positive function $ g $ defined in the positive half-line is operator convex and decreasing if and only if it can be written on the form
\[
g(t)=\beta+\int_0^\infty \frac{1+\lambda}{t+\lambda}\,d\mu(\lambda)\qquad t>0,
\]
where $ \mu $ is a positive and bounded measure and $ \beta\ge 0. $
\end{proposition}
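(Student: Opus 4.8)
The plan is to establish the two implications separately; the forward (``only if'') direction carries essentially all the work.

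For the converse I would simply check the building blocks: for each $\lambda\ge 0$ the function $t\mapsto(1+\lambda)(t+\lambda)^{-1}$ is positive, numerically decreasing, and operator convex --- the last because $x\mapsto x^{-1}$ is operator convex on the positive definite operators while $x\mapsto x+\lambda$ is affine and order preserving --- and the constant $\beta\ge 0$ trivially has all three properties. These properties then pass to $g(t)=\beta+\int_0^\infty(1+\lambda)(t+\lambda)^{-1}\,d\mu(\lambda)$, because the operator inequalities expressing positivity, operator convexity, and operator monotone decrease hold pointwise in $\lambda$ and survive integration against the positive measure $\mu$ (one uses $\langle\xi,\int X_\lambda\,d\mu\,\xi\rangle=\int\langle\xi,X_\lambda\xi\rangle\,d\mu$), and the integral converges for each fixed $t>0$ since the integrand is then dominated by a $\mu$-integrable function.

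For the forward direction I would start from the known canonical representation of operator convex functions on the positive half-line (the one the proposition cites), first noting that operator convexity on an open interval already forces real analyticity, so $g$ is smooth. Writing
\[
g(t)=\gamma_0+\gamma_1(t-1)+\gamma_2(t-1)^2+\int_{[0,\infty)}\frac{(t-1)^2}{t+\lambda}\,d\mu(\lambda),\qquad t>0,
\]
with $\gamma_2\ge 0$ and $\mu\ge 0$ making the integral finite, the task is to peel off the affine and quadratic parts. The mechanism is the identity $(t-1)^2/(t+\lambda)=(t-1)\bigl(1-(1+\lambda)/(t+\lambda)\bigr)$ together with $(1+\lambda)(t-1)/(t+\lambda)=(1+\lambda)-(1+\lambda)^2/(t+\lambda)$: one differentiation gives $g'(t)=\gamma_1+2\gamma_2(t-1)+\int_{[0,\infty)}\bigl(1-(1+\lambda)^2(t+\lambda)^{-2}\bigr)\,d\mu(\lambda)$, and since the integrand increases to $1$, letting $t\to\infty$ in the inequality $g'\le 0$ (numerical decrease) forces $\gamma_2=0$ and $\mu([0,\infty))<\infty$, and yields $\gamma_1+\mu([0,\infty))\le 0$. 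With $\gamma_2=0$ the same identities rewrite $g$ as an affine term plus $(t-1)J(t)$, where $J(t)=\int_{[0,\infty)}(1+\lambda)(t+\lambda)^{-1}\,d\mu(\lambda)$; since $(t-1)J(t)$ increases (for $t\ge 1$) to $\int_{[0,\infty)}(1+\lambda)\,d\mu(\lambda)$, infiniteness of the latter would drive $g(t)\to-\infty$, contradicting positivity, so that integral is finite. A last use of the identity displays $g$ as $(\gamma_1+\mu([0,\infty)))(t-1)$ plus a constant plus $\int_{[0,\infty)}(1+\lambda)^2(t+\lambda)^{-1}\,d\mu(\lambda)$; positivity together with monotonicity forces the slope to vanish, and putting $\beta=\lim_{t\to\infty}g(t)\ge 0$ and $d\nu(\lambda)=(1+\lambda)\,d\mu(\lambda)$ --- a bounded positive measure --- gives the asserted form. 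Uniqueness of $\nu$, and hence of $\mu$, I would get from injectivity of the Stieltjes transform $\nu\mapsto\bigl(t\mapsto\int_{[0,\infty)}(1+\lambda)(t+\lambda)^{-1}\,d\nu(\lambda)\bigr)$.

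The step I expect to be delicate is precisely this bookkeeping: a naive term-by-term splitting of $\int(t-1)^2(t+\lambda)^{-1}\,d\mu$ along the polynomial division introduces spurious divergences, so the three finiteness facts ($\gamma_2=0$, $\mu$ finite, $\int(1+\lambda)\,d\mu<\infty$) must be extracted from the asymptotics of the combined expressions for $g$ and $g'$ as $t\to\infty$, with positivity entering exactly to rule out a residual blow-up. Once these are in hand, the remainder is purely algebraic.
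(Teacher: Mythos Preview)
Your argument is correct and complete; the converse direction and the forward bookkeeping both go through as you describe.

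The paper, however, takes a genuinely different and much shorter route for the forward implication. Instead of starting from the Kraus-type representation of operator convex functions and whittling it down, the paper observes that a positive, decreasing, operator convex $g$ is operator decreasing, so $h(t)=g(t^{-1})$ is operator monotone; it then quotes a result (Hansen, \emph{Linear Algebra Appl.} 438 (2013), Corollary~4.2) to the effect that the transform $h^*(t)=t\,h(t^{-1})=t\,g(t)$ is again operator monotone, and reads off the claimed representation directly from the L\"owner--Nevanlinna integral formula for positive operator monotone functions (Corollary~5.1 of the same reference), with $\beta$ appearing as the mass at infinity. No asymptotic analysis or term-by-term elimination is needed.

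The trade-off is clear. The paper's route is conceptually clean and essentially a two-line reduction, but it imports two nontrivial external results and passes through operator monotonicity and the auxiliary function $t\,g(t)$. Your route stays entirely inside the operator convex representation and is self-contained modulo that one representation theorem; it also makes transparent exactly where each hypothesis is spent---operator convexity supplies the starting formula, decrease forces $\gamma_2=0$ and $\mu$ bounded, and positivity forces $\int(1+\lambda)\,d\mu<\infty$ and kills the residual linear term. The delicate sequencing you flagged (extracting the three finiteness facts from the asymptotics of $g$ and $g'$ before splitting the integral) is precisely the price of this directness.
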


It may be easier to establish the necessity of the integral representation by noticing that $ g $ is operator decreasing and the function $ h(t)= g(t^{-1}) $ thus operator monotone. The function
\[
h^*(t)=t h(t^{-1})=t g(t)\qquad t>0
\]
is therefore operator monotone by \cite[Corollary 4.2]{kn:hansen:2013:1}. The integral formula then follows from Corollary 5.1 in the same reference by  setting $ \beta=\mu(\{\infty\}). $ \\[1ex]
Based on massive numerical calculations we conjecture that the matrix entropies given in Theorem~\ref{sufficient condition} exhaust the class of matrix entropies.\\[1ex]
Chen and Tropp proved that the standard entropy function $ t\mapsto t\log t$ and
the power functions $t\mapsto t^{p}$ for $p\in [1,2]$ are representing functions for matrix entropies. These statements are mathematically already contained  in earlier results by Lieb \cite{kn:lieb:1973:1} and the first author \cite{kn:hansen:2014} formulated outside the theory of matrix entropies. This is an example of how different authors may arrive at similar conclusions as the result of independent research activities.

\section{Reformulating the main condition}

Theorem~\ref{characterisation of matrix entropies} follows from a number of separate results, the first being
an adaptation of a result by the first author \cite{kn:hansen:2014} applying ideas going back to Lieb \cite{kn:lieb:1973:1}. 

\begin{theorem}\label{Theorem: convexity of Q-form}

Let $ f\colon (0,\infty)\to\mathbf R $ be  a strictly increasing continuously differentiable function, and let $ n $ be a fixed natural number.  The following conditions are equivalent.

\begin{enumerate}[(i)]

\item The map $ x \mapsto \tr h^*\fd{f}(x)^{-1}h $ is, for each $ h\in\mathcal H_n, $  concave in positive definite operators $ x\in B(H_n). $

\item The map $ (x,h)\mapsto \tr h^*\fd{f}(x)h $ is convex in pairs of operators in $ B(H_n), $ where $ x $ is positive definite.  

\end{enumerate}

In condition $ (ii) $ it is sufficient to assume convexity in pairs $ (x,y), $ where $ x $ is positive definite and $ y $ is self-adjoint.  

\end{theorem}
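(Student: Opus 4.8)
The plan is to exploit the Fenchel–Legendre duality between a positive definite quadratic form and the quadratic form built from its inverse. First I would record, via the Daleckii–Krein representation, that when $ x $ has spectral resolution $ x=\sum_i\lambda_iP_i $ the Fréchet differential acts by
\[
\fd{f}(x)h=\sum_{i,j}f^{[1]}(\lambda_i,\lambda_j)\,P_ihP_j,
\]
where $ f^{[1]} $ is the divided difference of $ f. $ Strict monotonicity of $ f $ makes the off-diagonal divided differences positive, while the diagonal entries $ f'(\lambda_i) $ are positive because condition $ (i) $ presupposes that $ \fd{f}(x) $ is invertible. Hence $ \fd{f}(x) $ is a positive definite operator on $ \mathcal H_n $ that is self-adjoint with respect to the trace inner product $ \langle h,k\rangle=\tr h^*k $ and leaves the self-adjoint operators invariant; in particular $ \fd{f}(x)^{-1} $ is well defined and positive definite, so both quadratic forms in the statement make sense.

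The heart of the argument is the pair of variational identities
\[
\tr k^*\fd{f}(x)^{-1}k=\sup_{h\in\mathcal H_n}\Bigl(2\operatorname{Re}\tr k^*h-\tr h^*\fd{f}(x)h\Bigr),
\]
\[
\tr h^*\fd{f}(x)h=\sup_{k\in\mathcal H_n}\Bigl(2\operatorname{Re}\tr k^*h-\tr k^*\fd{f}(x)^{-1}k\Bigr),
\]
each valid for fixed positive definite $ x $ since the supremum of a positive definite form is attained at $ h=\fd{f}(x)^{-1}k, $ respectively $ k=\fd{f}(x)h. $ To prove $ (ii)\Rightarrow(i) $ I would fix $ k $ and read the first identity: joint convexity of $ (x,h)\mapsto\tr h^*\fd{f}(x)h $ makes the expression inside the supremum jointly concave in $ (x,h) $ (the cross term being affine), and a supremum over $ h $ of functions jointly concave in $ (x,h) $ is a concave function of $ x, $ which is exactly $ (i). $ For the converse $ (i)\Rightarrow(ii) $ I would fix $ k $ and use the second identity: concavity of $ x\mapsto\tr k^*\fd{f}(x)^{-1}k $ makes the integrand jointly convex in $ (x,h), $ and a supremum over $ k $ of jointly convex functions is jointly convex, yielding $ (ii). $

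For the final assertion I would decompose an arbitrary $ h\in B(H_n) $ into its self-adjoint parts $ y_1=(h+h^*)/2 $ and $ y_2=(h-h^*)/(2i). $ Because $ \fd{f}(x) $ is self-adjoint and preserves the self-adjoint operators, the two imaginary cross terms cancel by cyclicity of the trace, giving $ \tr h^*\fd{f}(x)h=\tr y_1\fd{f}(x)y_1+\tr y_2\fd{f}(x)y_2. $ Since $ h\mapsto(y_1,y_2) $ is real linear, joint convexity for self-adjoint second arguments propagates to joint convexity for all $ h, $ as a sum of jointly convex functions precomposed with linear maps.

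The routine points—the Daleckii–Krein formula, positivity of the divided differences, and the cross-term cancellation—are standard. I expect the only genuinely delicate step to be the careful verification of the two variational identities together with the attendant convexity bookkeeping in the complex trace inner product: one must check that each supremum is taken over a family that is jointly concave, respectively jointly convex, in the full pair $ (x,h) $ rather than merely separately in each variable. This joint structure is precisely what allows the duality to transfer convexity between the form and its inverse.
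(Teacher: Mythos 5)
Your proof is correct, but it takes a genuinely different route from the paper's. The paper argues in the style of Lieb: it introduces the two quadratic forms $\alpha(X\oplus Y)=\lambda\tr X^*\fd{f}(A_1)X+(1-\lambda)\tr Y^*\fd{f}(A_2)Y$ and $\beta(X\oplus Y)=\tr(\lambda X+(1-\lambda)Y)^*\fd{f}(A)(\lambda X+(1-\lambda)Y)$ on $\mathcal H_n\oplus\mathcal H_n$, represents $\alpha$ by a positive operator $\Gamma$ in the Hilbert space structure given by $\beta$, and shows by analysing the eigenvector equation that every eigenvalue of $\Gamma$ is at least $1$ exactly when $x\mapsto\tr h^*\fd{f}(x)^{-1}h$ is concave; the majorisation $\beta\le\alpha$ is then the desired joint convexity, and the converse runs the same computation backwards. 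You replace the eigenvalue analysis by the Legendre--Fenchel duality between the positive definite form $h\mapsto\tr h^*\fd{f}(x)h$ and the form built from its inverse, after which each implication is an application of a standard convex-analysis fact: partial maximisation of a jointly concave function is concave, and a pointwise supremum of jointly convex functions is convex (in your second identity the integrand is jointly convex because it is a sum of a linear function of $h$ alone and a convex function of $x$ alone). Both arguments rest on the same input, namely that $\fd{f}(x)$ is a positive definite super-operator, which you and the paper both extract from the L\"owner-matrix representation together with the strict monotonicity of $f$ (strictly speaking both proofs also need $f'>0$ on the spectrum, which is implicit in the invertibility presupposed by condition (i)). Your duality argument is shorter and more conceptual; the paper's is self-contained in the operator framework and exhibits the explicit comparison operator $\Gamma\ge 1$. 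Your handling of the final assertion --- splitting $h$ into self-adjoint real and imaginary parts and checking that the cross terms cancel because $\fd{f}(x)$ is symmetric and preserves self-adjoint operators --- is likewise more explicit than the paper's remark that one may rerun the whole construction on the real subspace of self-adjoint operators; both are valid.
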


\begin{proof}
We first assume $ (i) $ and define two quadratic forms $ \alpha $ and $ \beta $ on the direct sum $ \mathcal H_n \oplus \mathcal H_n $ by setting
\[
\begin{array}{rl}
\alpha(X\oplus Y)&=\lambda \tr X^*\fd{f}(A_1)X +(1-\lambda) \tr Y^*\fd{f}(A_2)Y\\[2ex]
\beta(X\oplus Y)&=\tr (\lambda X^*+(1-\lambda)Y^*)\fd{f}(A)(\lambda X+(1-\lambda)Y),
\end{array}
\]
where $ A_1,A_2 $ are two fixed positive definite operators in $ B(H_n), $ and $ A=\lambda A_1+(1-\lambda) A_2 $ for some $ \lambda\in[0,1]. $ The differential operator $ \fd{}f(x) $ is a super operator on $ \mathcal H_n $ defined first in self-adjoint operators by the functional calculus and then extended to $ \mathcal H_n $ by linearity.
The statement of the theorem is equivalent to the majorisation
\begin{equation}\label{quadratic form majorisation}
\beta(X\oplus Y)\le\alpha(X\oplus Y)
\end{equation}
for arbitrary $ X,Y\in \mathcal H_n\,. $
Let $ (e_i)_{i=1}^n $ be a basis in which $ x $ is diagonal and let $ \lambda_1,\dots,\lambda_n $ be the corresponding
eigenvalues counted with multiplicity. Expressed in this basis\,  $ \fd{f}(x)h=h\circ L_f\bigl(\lambda_1,\dots,\lambda_n\bigr) $ is the Hadamard (entry-wise) product of $ h $ and the L\"{o}wner matrix 
\[
L_f\bigl(\lambda_1,\dots,\lambda_n\bigr)=  \left([\lambda_i,\lambda_j]_f\right)_{i,j=1}^n ,
\]
where the divided difference $[t,s]_{f}$ is defined by setting
\[
[t,s]_{f}=\left\{
            \begin{array}{ll}\displaystyle
              \frac{f(t)-f(s)}{t-s}  \hskip 2em &t \neq s \\[2.5ex]

              f^{'}(t)                &t=s.  \\
            \end{array}
          \right.
\]The quadratic form $ h\mapsto\tr h^*\fd{f}(x)h $ is positive definite since
\[
\tr h^*\fd{f}(x)h=\sum_{i,j=1}^n |(h e_i\mid e_j)|^2[\lambda_i,\lambda_j]_{f}
\]
and $[\lambda_i,\lambda_j]_{f}>0. $ The
corresponding sesqui-linear form is given by
\[
(h,k)\to\tr k^*\fd{f}(x)h.
\]
The two quadratic forms $ \alpha $ and $ \beta $ are in particular positive definite. Therefore,
there exists an operator $ \Gamma $ on $ \mathcal H_n \oplus \mathcal H_n $ which is positive
definite in the Hilbert space structure given by $ \beta $ such that
\[
\alpha\bigl(X\oplus Y, X'\oplus Y'\bigr)=\beta\bigl(\Gamma(X\oplus Y), X'\oplus Y'\bigr)\qquad X,X',Y,Y'\in \mathcal H_n\,,
\]
where we retain the notation $ \alpha $ and $ \beta $ also for the corresponding sesqui-linear forms.
Let $ \gamma $ be an eigenvalue of $ \Gamma $ corresponding to an eigenvector $ X\oplus Y. $ Then
\[
\alpha\bigl(X\oplus Y,X'\oplus Y'\bigr)=\beta\bigl(\gamma(X\oplus Y), X'\oplus Y'\bigr)\qquad\text{for}\quad  X',Y'\in \mathcal H_n
\]
or equivalently
\[
\begin{array}{l}
\lambda\tr (X')^*\fd{f}(A_1)X+(1-\lambda)\tr (Y')^*\fd{f}(A_2)Y\\[1.5ex]
=\gamma\tr (\lambda (X')^*+(1-\lambda)(Y')^*)\fd{f}(A)(\lambda X+((1-\lambda)Y)
\end{array}
\]
for arbitrary $ X',Y'\in \mathcal H_n\,. $ We may assume $ 0<\lambda<1 $ and then derive that
\[
\fd{f}(A_1)X=\gamma\fd{f}(A)(\lambda X+(1-\lambda)Y)=\fd{f}(A_2)Y.
\]
Thus by setting $ M=\fd{f}(A)(\lambda X+(1-\lambda)Y) $ we obtain
\[
\begin{array}{rl}
\fd{f}(A)^{-1}(M)&=\lambda X+(1-\lambda) Y\\[1.5ex]
&=\lambda\fd{f}(A_1)^{-1}(\gamma M)+(1-\lambda)\fd{f}(A_2)^{-1}(\gamma M).
\end{array}
\]
By multiplying from the left with $ M^* $ and taking the trace we obtain
\[
\begin{array}{l}
\gamma\bigl(\lambda\tr M^* \fd{f}(A_1)^{-1} M + (1-\lambda)\,\tr M^* \fd{f}(A_2)^{-1}M\bigr)
=\displaystyle\tr M^* \fd{f}(A)^{-1} M\\[2.5ex]
\ge\displaystyle \lambda\tr M^* \fd{f}(A_1)^{-1} M +(1-\lambda)\tr M^* \fd{f}(A_2)^{-1} M,
\end{array}
\]
where the last inequality is implied by the concavity of $x \mapsto \tr h^*\fd{f}(x)^{-1}h .$
This shows that the operator $ \Gamma\ge 1$ from which (\ref{quadratic form majorisation})
and thus statement $ (ii) $ of the theorem follows.

If we instead assume statement $ (ii) $ in the theorem and consider the same construction as above, then the eigenvalue $ \gamma\ge 1 $ and the last inequality therefore implies that
\[
\tr M^* \fd{f}(A)^{-1} M
\ge\displaystyle \lambda\tr M^* \fd{f}(A_1)^{-1} M +(1-\lambda)\tr M^* \fd{f}(A_2)^{-1} M
\]
for each $ M\in\mathcal H_n $ on the form  $ M=d f(A)(\lambda X+(1-\lambda)Y). $ Since the Fréchet differential
$ \fd{f}(A) $  is bijective, any vector $ M\in\mathcal H_n $ may be written in this form. We conclude that the map $ x\mapsto \fd{f}(x)^{-1} $ is concave which is statement $ (i) $ in the theorem.

By replacing $ \mathcal H_n $ with the vector space of self-adjoint operators on $ H_n $ we may carry out the same construction as above without any essential changes in the proof. This shows that we may relax condition $ (ii) $ to pairs $ (x,y), $ where $ x $ is positive definite and $ y $ is self-adjoint.
\end{proof}

\section{A bivariate trace function}

Let $ \varphi\colon (0,\infty)\to \mathbf{R} $ be a differentiable function, and let $ f=\varphi' $ denote the derivative of $ \varphi. $ 

\begin{theorem}\label{characterisation in terms of two variable trace function}

Then $ \varphi\in\Phi_n $ if and only if the trace function of two variables,
\begin{equation}\label{convex function of two variables}
(x,y)\mapsto \tr (y-x)(f(y)-f(x)),
\end{equation}
is convex in positive definite $ n\times n $ matrices.
\end{theorem}

\begin{proof} 

We first assume $ \varphi\in\Phi_n $ for a fixed natural number $ n. $ Take an operator $ h $ in $ B(H_n) $ and consider arbitrary operators $ y \in B(H_n). $ By composing with the linear map $ y\mapsto yh, $ we obtain that the map 
\[
(x,y)\mapsto \tr h^* y^* \fd{f}(x)(yh)
\]
is convex in pairs of operators $ (x,y) $ where $ x $ is positive definite. Furthermore,
\[
\fd{f}(x)(yh)=\fd{f}(x)(L_y h)=(\fd{f}(x)L_y)(h),
\]
where $ L_y $ denotes left multiplication with $ y. $ The map
\[
(x,y)\to \tr h^* \bigl(L_y^* \fd{f}(x)L_y\bigr)(h)
\]
is therefore convex and since $ h $ is arbitrary, we obtain that the map
\[
(x,y)\to L_y^* \fd{f}(x) L_y\in B(\mathcal H_n)
\]
is convex in pairs of operators $ (x,y) $ where $ x $ is positive definite.  Let now also $ y $ be positive definite. To each $ t\in[0,1] $ we set  $ x_t=(1-t)x+ t y. $ By composing with the linear map  $ (x,y)\mapsto (x_t, y-x) $ we obtain that
the map
\[
(x,y)\mapsto L_{y-x} \fd{f}(x_t)L_{y-x}\in B(\mathcal H_n)
\]
is convex in pairs of positive definite $ n\times n $ matrices.  We then define an operator $ T(x,y)\in B(\mathcal H_n) $ by setting
\begin{equation}\label{two variable operator convex map}
T(x,y)= L_{y-x} \int_0^1 \fd{f}(x_t)L_{y-x}\,dt
\end{equation}
for positive definite operators $ x $ and $ y $ on $ H_n\,. $
It follows from the above that $ T $ is convex.  By taking the expectation of $ T(x,y) $ in the unit operator we obtain the identity
\begin{equation}\label{trace of T(x,y)}
\tr T(x,y)=\tr L_{y-x} \int_0^1 \fd{f}(x_t)(y-x)\,dt=\tr (y-x)(f(y)-f(x)),
\end{equation}
cf. for example \cite[Theorem 2.1]{kn:hansen:1995}. The statement then follows from the convexity of $ T. $

Suppose on the other hand that the two variable trace function defined in (\ref{convex function of two variables}) is convex in positive definite $ n\times n $ matrices. The two variable function 
\[
(x,y)\to \tr T(x,y) =\tr (y-x)\int_0^1 \fd{}f(x_t)(y-x)\,dt,
\]
where $ x_t=(1-t)x+ty, $
is then convex by the identity in (\ref{trace of T(x,y)}). In particular,
\[
\begin{array}{l}
\displaystyle\tr T\Bigl(\frac{x_1+x_2}{2}\,,\frac{(x_1+s y_1)+(x_2+s y_2)}{2}\Bigr)\\[2ex]
\hskip 8em\displaystyle\le\frac{1}{2}\bigl(\tr T(x_1, x_1+s y_1)+\tr T(x_2, x_2+s y_2)\bigr)
\end{array}
\]
for positive definite matrices $ x_1,x_2 $ and hermitian matrices $ y_1,y_2 $ and $ s>0 $ such that $ x_1+sy_1 $ and $ x_2+s y_2 $ are positive definite. This reduces to
\[
\begin{array}{l}
\displaystyle s^2\tr \frac{y_1+y_2}{2}\int_0^1 \fd{}f\Bigl(\frac{x_1+x_2}{2}+ts\frac{y_1+y_2}{2}\Bigr) \frac{y_1+y_2}{2}\,dt\\[2ex]
\displaystyle\le\frac{s^2}{2}\Bigl(\tr y_1\int_0^1 \fd{}f(x_1+ts y_1) y_1\, dt +\tr y_2\int_0^1 \fd{}f(x_2+ts y_2) y_2\, dt \Bigr).
\end{array}
\]
Since the Fréchet differential is continuous, we obtain by dividing with $ s^2 $ and then letting $ s $ tend to zero the inequality
\[
\begin{array}{l}
\displaystyle \tr \frac{y_1+y_2}{2} \fd{}f\Bigl(\frac{x_1+x_2}{2}\Bigr) \frac{y_1+y_2}{2}
\displaystyle\le\frac{1}{2}\bigl(\tr y_1\fd{}f(x_1) y_1 +\tr y_2 \fd{}f(x_2) y_2\bigr)
\end{array}
\]
showing that the map $ (x,y)\to \tr y\,\fd{}f(x)y $ is convex in pairs $ (x,y) $ of $ n\times n $ matrices, where $ x $ is positive definite and $ y $ is self-adjoint.
\end{proof}

\section{Bivariate operator convex functions}

Consider a function $ g\colon D\to\mathbf R $ of two variables defined in a convex domain $ D\subseteq\mathbf R^2. $
Let $ x $ and $ y $ be commuting self-adjoint operators on a Hilbert space of finite dimension $ n $ with spectra $ \sigma(x) $ and $ \sigma(y) $ such that $ \sigma(x)\times\sigma(y)\subset D. $ We say that $ (x,y) $ is in the domain of $ g. $ 
Consider the spectral resolutions
\[
x=\sum_{i=1}^p \lambda_i P_i\qquad\text{and}\qquad y=\sum_{j=1}^q \mu_j Q_j,
\]
where $ \lambda_1,\dots,\lambda_p $ and $ \mu_1,\dots,\mu_q $ respectively are the eigenvalues of $ x $ and $ y, $ and
$ P_1,\dots,P_p $ and $ Q_1,\dots,Q_q $ are the corresponding commuting spectral projections. The functional calculus is defined by setting
\[
g(x,y)=\sum_{i=1}^p\sum_{j=1}^q g(\lambda_i,\mu_j) P_i Q_j.
\]
\begin{definition}\label{operator convexity for bivariate functions}
The function $ g $ is said to be matrix convex of order $ n $ if for arbitrary $ * $-algebras $ \mathcal A_1 $ and $ \mathcal A_2 $ of operators acting on $ H_n $ the inequality
\begin{equation}\label{convexity condition}
g(\lambda x_1+(1-\lambda)x_2, \lambda y_1+(1-\lambda)y_2)\le\lambda g(x_1,y_1)+(1-\lambda)g(x_2,y_2)
\end{equation}
holds for $ \lambda\in[0,1] $ and operators $ x_1,y_1\in\mathcal A_1 $ and $ x_2,y_2\in\mathcal A_2 $ such that
$ (x_1,y_1) $ and $ (x_2,y_2) $ are in the domain of $ g. $
\end{definition}

Notice that under the conditions given in the above definition, the pair $ (\lambda x_1+(1-\lambda)x_2, \lambda y_1+(1-\lambda)y_2) $ is automatically in the domain of $ g. $ We say that $ g $ is operator convex if $ g $ is matrix convex of all orders.

Kor\'anyi \cite[page 542]{kn:koranyi:1961} gave a definition of the functional calculus for bivariate functions in which the  pairs to which the function is applied are mapped into commuting parts of a tensor product. This type of functional calculus is convenient in many situations, but ceases to be useful when all the relevant operators are supposed to act on the same Hilbert space. There is a certain literature exploring operator convexity for multivariate functions, where Kor\'anyi's functional calculus and its obvious generalisations to more than two variables, are applied.

It is clear that a function which is operator convex by Definition~\ref{operator convexity for bivariate functions} is also
operator convex in the sense of Kor\'anyi\footnote{We use this terminology even though Kor\'anyi did not study convex functions.}.  However, Definition~\ref{operator convexity for bivariate functions} is not an empty generalisation, and this can be gleaned from the following example: The function $ f(t,s)=(ts)^{-1} $ is operator convex with respect to Kor\'anyi's functional calculus. The result is essentially due to Lieb~\cite[Theorem 8]{kn:lieb:1973:1}, cf. also Ando~\cite[Theorem 5]{kn:ando:1979}, who proved that $ (x,y)\to x^{-1}\otimes y^{-1} $ is a convex map. The diagonal map $ x\to x^{-1}\otimes x^{-1} $ is in particular convex. In fact, Ajula~\cite[Theorem 3.1]{aujla:1993} proved that a bivariate function $ f(t,s) $ is operator convex with respect to Kor\'anyi's functional calculus, if and only if the diagonal map $ x\to f(x,x) $ is convex. One can show that the function $ f(t,s)=(ts)^{-1} $ is also operator convex according to Definition~\ref{operator convexity for bivariate functions}, but the diagonal map $ x\to f(x,x)=x^{-2} $ is not convex.

\begin{theorem}\label{characterisation in terms of operator concavity}

Let $ \varphi\colon (0,\infty)\to \mathbf{R} $ be a differentiable function, and let $ f=\varphi' $ denote the derivative of $ \varphi. $ 
Then $ \varphi $ is the representing function of a matrix entropy if and only if the bivariate function
\[
g(t,s)=\frac{s-t}{f(s)-f(t)}\qquad t,s>0
\]
is operator concave.

\end{theorem}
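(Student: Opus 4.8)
The plan is to read the theorem off from the earlier results once the defining condition of a matrix entropy has been translated into a statement about the bivariate functional calculus of $g$. By the definition of $\Phi_n$ (or, equivalently, by Theorem~\ref{Theorem: convexity of Q-form} together with Theorem~\ref{characterisation in terms of two variable trace function}), $\varphi$ is the representing function of a matrix entropy exactly when the map $x\mapsto\fd{f}(x)^{-1}$ is concave in positive definite operators on $H_n$ for every $n$; note that for $g$ to be well defined $f=\varphi'$ must be strictly increasing, so each Fréchet differential $\fd{f}(x)$ is indeed invertible. The link to $g$ is the identity
\[
\fd{f}(x)^{-1}=g(L_x,R_x),
\]
where $L_x$ and $R_x$ denote left and right multiplication by $x$: in an eigenbasis of $x$ the super-operator $\fd{f}(x)$ is Hadamard multiplication by the L\"owner matrix $\bigl([\lambda_i,\lambda_j]_f\bigr)$, that is, it is the functional calculus of the divided-difference function $(t,s)\mapsto[t,s]_f$ at the commuting pair $(L_x,R_x)$ on $\mathcal H_n$, and inverting replaces $[t,s]_f$ by its reciprocal $g(t,s)$. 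So it remains to show that $x\mapsto g(L_x,R_x)$ is concave for all $n$ if and only if $g$ is operator concave.

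For the implication starting from operator concavity of $g$: the operators $L_x$ and $R_x$ commute, are self-adjoint for the trace inner product and have spectra in $(0,\infty)$, so $(L_x,R_x)$ lies in the domain of $g$; and since left and right multiplications always commute, so does the average $\bigl(L_{(x_1+x_2)/2},R_{(x_1+x_2)/2}\bigr)$. Applying the concavity inequality of Definition~\ref{operator convexity for bivariate functions} on the Hilbert space $\mathcal H_n$ — whose dimension merely plays the role of the order — with the commutative algebras generated by $(L_{x_i},R_{x_i})$ gives
\[
\fd{f}\bigl(\tfrac12(x_1+x_2)\bigr)^{-1}\ge\tfrac12\bigl(\fd{f}(x_1)^{-1}+\fd{f}(x_2)^{-1}\bigr),
\]
and continuity of $x\mapsto\fd{f}(x)^{-1}$ promotes this midpoint estimate to concavity. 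By the defining conditions of $\Phi_n$ this means $\varphi\in\Phi_n$ for every $n$.

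For the converse I would dilate. Given positive definite $x$ and $y$ on $H_n$, set $z=x\oplus y$ on $H_n\oplus H_n$ and observe that the action of $\fd{f}(z)^{-1}$ on the upper-right block is exactly the functional calculus $g(L_x,R_y)$ of $g$ at the commuting pair of left multiplication by $x$ and right multiplication by $y$ — the pair to which Kor\'anyi's calculus applies. Concavity of $w\mapsto\fd{f}(w)^{-1}$ at order $2n$ then descends to concavity of $(x,y)\mapsto g(L_x,R_y)$, which is operator concavity of $g$ in the sense of Kor\'anyi. To reach the stronger concavity of Definition~\ref{operator convexity for bivariate functions}, in which the two commuting pairs being averaged need not be compatible, one compresses these super-operator inequalities against a carefully chosen family of test operators $h$, realising an arbitrary commuting pair $(x,y)$ — and simultaneously a second, independent one — inside the construction; continuity of $g$ then upgrades midpoint concavity to operator concavity.

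The main obstacle is precisely this last step of the converse: passing from the Kor\'anyi-type concavity that the dilation delivers directly to concavity in the sense of Definition~\ref{operator convexity for bivariate functions}. The delicate point is to choose the test operators $h$ so that the quadratic forms $h\mapsto\tr h^{*}\fd{f}(\cdot)^{-1}h$ reproduce the numbers $\langle\zeta,g(x_i,y_i)\zeta\rangle$ for arbitrary commuting pairs, without assuming that the two pairs are simultaneously diagonalisable; this is the same kind of calculation with L\"owner matrices as in the proof of Theorem~\ref{Theorem: convexity of Q-form} and in \cite{kn:hansen:2014}. Once the identity $\fd{f}(x)^{-1}=g(L_x,R_x)$ and this compression device are in hand, the remaining steps are routine.
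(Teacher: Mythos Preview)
Your approach coincides with the paper's up through the key identity $\fd{f}(x)^{-1}=g(L_x,R_x)$ and the dilation $Z=x\oplus y$ yielding
\[
\tr H^{*}\fd{f}(Z)^{-1}H=\tr h^{*}\,g(L_x,R_y)\,h,
\]
hence concavity of $(x,y)\mapsto g(L_x,R_y)$ as a super-operator. So the forward direction and the first half of the converse match the paper exactly.

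The divergence is precisely at the step you flag as the obstacle: passing from concavity of $(x,y)\mapsto g(L_x,R_y)$ to operator concavity in the sense of Definition~\ref{operator convexity for bivariate functions}. You propose to do this by compressing against cleverly chosen test operators $h$ so as to ``realise an arbitrary commuting pair \dots\ and simultaneously a second, independent one,'' but you do not actually carry this out, and it is not clear that L\"owner-matrix bookkeeping alone will produce the required inequality when the two commuting pairs $(x_1,y_1)$ and $(x_2,y_2)$ live in unrelated $*$-algebras. The paper bypasses this computation entirely with a structural argument: given commuting $*$-algebras $\mathcal A_1,\mathcal A_2$ on a finite-dimensional space, one reduces (via the direct-sum decomposition into factors) to the case of two commuting factors with $\mathcal A_1'=\mathcal A_2$. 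Such a pair is in standard form and is \emph{precisely} the left and right regular representations of some $B(H_k)$ on $\mathcal H_k$. In other words, every instance of Definition~\ref{operator convexity for bivariate functions} already \emph{is} an instance of the map $(x,y)\mapsto g(L_x,R_y)$, and the concavity established by the dilation is the full statement.

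So the missing idea is representation-theoretic rather than computational: you do not need to manufacture test operators to embed general commuting pairs into the $L_x,R_y$ picture; the structure theory of finite-dimensional $*$-algebras tells you that, up to direct sums, they are already there.
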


\begin{proof}

By calculation we obtain that the expectation of the inverse Fr\'{e}chet differential is given by
\[
\tr h^*\fd{f}(x)^{-1} h=\sum_{i,j=1}^n |(he_i\mid e_j)|^2 \frac{\lambda_i-\lambda_j}{f(\lambda_i)-f(\lambda_j)}=\tr h^*\, g(L_x,R_x) h,
\]
where $L_x$ and $R_x$ denote left and right multiplication with $x,$ respectively. If $ g $ is operator concave it follows that $ x\to\fd{}f(x)^{-1} $ is concave, thus $ \varphi $ is the representing function of a matrix entropy.

To prove the converse we consider block matrices
\[
H=\begin{pmatrix}
     0 & h\\
     0 & 0
     \end{pmatrix}\qquad\text{and}\qquad
Z=\begin{pmatrix}
           x & 0\\
           0 & y
           \end{pmatrix}.
\]     
It is a matter of simple algebra to prove the identities 
\[
\tr H^*L_Z H=\tr h^* L_x h\quad\text{and}\quad
\tr H^*R_Z H=\tr h^* R_y h.
\]
Consider two positive definite $ n\times n $ matrices $ x $ and $ y, $ and let $ (e_1,\dots,e_n) $ and $ (d_1,\dots,d_n), $ respectively, be orthonormal bases of eigenvectors of $ x $ and $ y $ such that
\[
xe_i=\lambda_i e_i\qquad\text{and}\qquad y d_i=\mu_i d_i
\]
for $ i=1,\dots,n. $ Setting 
\[
E_i=\left\{\begin{array}{ll}
              e_i\oplus \nv\qquad &i=1,\dots,n\\[1ex]
              \nv\oplus d_{i-n}      &i=n+1,\dots,2n
              \end{array}\right.
\]              
the orthonormal basis $ (E_1,\dots,E_{2n}) $ in $ H_n\oplus H_n $ diagonalises $ Z, $ and since $ \fd{}f(Z)H=L_f(Z)\circ H $ is the Hadamard product of the corresponding L{\"o}wner matrix $ L_f(Z) $ and $ H $ expressed in this basis we obtain
\[
\begin{array}{rl}
\tr H^*\fd{}f(Z)^{-1}H&=\displaystyle\sum_{i,j=1}^n |(he_i\mid d_j)|^2 \frac{\lambda_i-\mu_j}{f(\lambda_i)-f(\mu_j)}=\tr h^* g(L_x,R_y)h.
\end{array}
\]
It follows that the map
\[
(x,y)\to g(L_x,R_y)\in B(\mathcal H_n),
\]
defined in positive definite operators $ x,y\in B(H_n), $ is convex. 
Let now $ \mathcal A_1 $ and $ \mathcal A_2 $ be two commuting $ * $-algebras on a Hilbert space of finite dimension.  We want to prove that the map $ (x,y)\to g(x,y) $ is convex in positive operators  $ (x,y)\in\mathcal A_1\times\mathcal A_2. $ Since finite dimensional $ * $-algebras are direct sums of factors, it is sufficient to prove the assertion for two commuting factors $ \mathcal A_1 $ and $ \mathcal A_2 $ with $ \mathcal A_1'=\mathcal A_2. $ Such factors are in the so-called standard representation and may be represented as the left, respectively right, representation of the algebra of operators acting on some finite dimensional Hilbert space.
It follows that $ g $ is matrix convex of any order and thus operator convex.
\end{proof}

\section{Proof of Theorem~\ref{sufficient condition}}

Let $ \varphi\colon (0,\infty)\to \mathbf{R} $ be a twice continuously differentiable convex function, and let $ f=\varphi' $ denote the derivative of $ \varphi. $ Suppose that $ f' $ is operator convex and decreasing. We consider the positive function
\[
k(t,s)=\frac{f(t)-f(s)}{t-s}=\int_0^1 f'(\lambda t+(1-\lambda)s)\,d\lambda\qquad t,s>0,
\]
where we used Hermite's formula. Let now $ x $ be a positive definite operator in $ B(H_n) $ and take
an orthonormal basis $ (e_1,\dots,e_n) $ in which $ x $ is diagonal with eigenvalues given by
\[
x e_i=\lambda_i e_i\qquad i=1,\dots,n.
\]
By calculation we obtain that the expectation of the Fr\'{e}chet differential is given by
\[
\tr h^*\fd{f}(x) h=\sum_{i,j=1}^n |(he_i\mid e_j)|^2 \frac{f(\lambda_i)-f(\lambda_j)}{\lambda_i-\lambda_j}=\tr h^*\, k(L_x,R_x) h,
\]
where $L_x$ and $R_x$ denote left and right multiplication with $x,$ respectively. 
Putting the formulas together we obtain the identity 
\begin{equation}\label{transform equation between derivatives}
\tr h^*\fd{f}(x) h=\int_0^1 \tr h^* f'(\lambda L_x + (1-\lambda) R_x) h\,d\lambda.
\end{equation}
The first author proved \cite[Page 100]{kn:hansen:2006:3} that a mapping of the type
\[
(x,\xi)\mapsto (g(x)\xi\mid\xi), 
\]
where $ g\colon (0,\infty)\to\mathbf R $ is a positive function,
is convex in pairs $ (x,\xi) $ of positive definite operators $ x $ on an arbitrary Hilbert space $ H $ and vectors $ \xi\in H, $ if $ g $ is operator convex and decreasing. Subsequently, Ando and Hiai \cite[Theorem 3.1]{kn:ando:2011} proved that the condition is not only sufficient but also necessary. 

Since the transformation $ x\to \lambda L_x + (1-\lambda) R_x $ is affine and $ f' $ is operator convex and numerically decreasing, we realise that the mapping of two variables in (\ref{transform equation between derivatives}) is convex. This proves $ (i) $ in Theorem~\ref{sufficient condition}.

Notice that a function $ f\colon (0,\infty)\to\mathbf R $ with operator convex and numerically decreasing derivative $ f' $ necessarily is operator monotone. 

We now use Proposition~\ref{integral formula for positive, decreasing operator convex functions} and write
\[
f'(t)=\beta+\int_0^\infty\frac{1+\lambda}{t+\lambda}\,d\mu(\lambda)\qquad t>0
\]
for some bounded positive measure $ \mu $ and $ \beta\ge 0. $ Therefore,
\[
f(t)=f(1)+\int_1^t f'(s)\,ds=f(1)+\beta(t-1)+\int_1^t\int_0^\infty \frac{1+\lambda}{s+\lambda}\,d\mu(\lambda)\,ds.
\]
Since the positive kernel is bounded within the integration limits, and the product measure is bounded, we may apply Fubini's theorem and obtain
\[
\begin{array}{rl}
f(t)&=\displaystyle f(1)+\beta(t-1)+\int_0^\infty (1+\lambda)\int_1^t\ \frac{1}{s+\lambda}\,ds\,d\mu(\lambda)\\[2.5ex]
&=\displaystyle  f(1)+\beta(t-1)+\int_0^\infty (1+\lambda)\log\frac{t+\lambda}{1+\lambda}\,d\mu(\lambda).
\end{array}
\]
We therefore obtain the representation
\[
\varphi(x)=\varphi(1)+\int_1^x f(t)\,dt\qquad x>0
\]
and write it on the form
\[
\varphi(x)=a+bx+\frac{\beta}{2}x^2+\int_1^x \int_0^\infty (1+\lambda)\log\frac{t+\lambda}{1+\lambda}\,d\mu(\lambda)\,dt.
\]
We now consider the kernel
\[
g(t,\lambda)=(1+\lambda)\log\frac{t+\lambda}{1+\lambda}=(1+\lambda)\bigl(\log(t+\lambda)-\log(1+\lambda)\bigr).
\]
If $ x\ge 1 $ and $ 1\le t\le x $ we may use the mean value theorem and obtain
\[
0\le g(t,\lambda)=(1+\lambda)\frac{t-1}{\xi}\qquad\text{for some}\quad 1+\lambda<\xi<t+\lambda,
\]
thus $ 0\le g(t,\lambda)\le x-1 $ for $ 1\le t\le x $ and $ \lambda\ge 0. $
If on the other hand $ 0< x\le 1 $ and $ x\le t\le 1, $ then
\[
0\ge g(t,\lambda)=(1+\lambda)\frac{t-1}{\xi} \qquad\text{for some}\quad t+\lambda<\xi<1+\lambda,
\]
thus
\[
0\ge g(t,\lambda)\ge (t-1)\frac{1+\lambda}{t+\lambda}\ge (x-1)\frac{1+\lambda}{x+\lambda}\ge x-1.
\]
We have shown that the kernel $ g(t,\lambda) $ is bounded on sets $ [x,1]\times [0,\infty) $ for $ 0<x\le 1 $ and
$ [1,x]\times [0,\infty) $ for $ 1\le x. $ We may thus as above apply Fubini's theorem and obtain
\[
\varphi(x)=a+bx+\frac{\beta}{2}x^2+\int_0^\infty(1+\lambda)\int_1^x \log\frac{t+\lambda}{1+\lambda}\,dt\,d\mu(\lambda).
\]
By calculating the inner integral we obtain $ \varphi $ on the canonical form
\begin{equation}\label{canonical representation of a matrix entropy}
\displaystyle \varphi(x)=a+bx+\frac{\beta}{2}x^2+\int_0^\infty (1+\lambda)\Bigl(1-x+(x+\lambda)\log\frac{x+\lambda}{1+\lambda}\Bigr)\,d\mu(\lambda),
\end{equation}
where $ a=\varphi(1)-\varphi'(1)+\beta/2 $ and $ b=\varphi'(1)-\beta. $ This is $ (iii) $ in Theorem~\ref{sufficient condition}.
We now turn the attention to the kernel
\[
h(x,\lambda)=(1+\lambda)\Bigl(1-x+(x+\lambda)\log\frac{x+\lambda}{1+\lambda}\Bigr)
\]
when $ 0<x\le 1. $ By the mean value theorem we obtain that
\[
\log(x+\lambda)=\log(1+\lambda)+\frac{x-1}{\xi}\quad\text{and thus}\quad \log\frac{x+\lambda}{1+\lambda}=\frac{x-1}{\xi}
\]
for some $ x+\lambda<\xi<1+\lambda. $ In particular,
\[
h(x,\lambda)=(1+\lambda)(1-x)\Bigl(1-\frac{x+\lambda}{\xi}\Bigr)\qquad\lambda\ge 0.
\]
Since $ \xi>x+\lambda $ we notice that $ h(x,\lambda)\ge 0, $ and since $ \xi<1+\lambda $ we obtain
\[
h(x,\lambda)\le (1+\lambda)(1-x)\Bigl(1-\frac{x+\lambda}{1+\lambda}\Bigr)=(1-x)^2.
\]
The kernel $ h(x,\lambda) $ is therefore uniformly bounded on the set $ [0,x]\times[0,\infty) $ for $ x\le 1. $ Since the measure $ \mu $ is bounded we conclude that $ \varphi $ is continuous in zero and that $ \lim_{x\to 0}\varphi(x)\le a+\mu([0,\infty)). $
This concludes the proof of Theorem~\ref{sufficient condition}.

{\small

%\bibliographystyle{plain}
%\bibliography{/users/frankhansen/Dropbox/UNIDATA/Macros/texmf/bibtex/Bib/Mathharv}

\noindent Frank Hansen: Institute for Excellence in Higher Education, Tohoku University, Japan.
Email: frank.hansen@m.tohoku.ac.jp.\\[1ex]
\noindent Zhihua Zhang: School of Mathematical Sciences, University of Electronic Science and Technology of China, P. R. China, and Department of Mathematics, Graduate School of Science, Tohoku University, Japan.
}

\end{document}